







\documentclass[twocolumn]{autart}    

\usepackage{graphicx}          

\usepackage{cite}
\usepackage{wrapfig}
\usepackage{multirow}

\usepackage{psfrag}
\usepackage{latexsym, amsmath, color, amsfonts, amssymb}
\usepackage{algorithm}
\let\classAND\AND
\let\AND\relax
\usepackage{algorithmic}

\let\AND\classAND
\AtBeginEnvironment{algorithmic}{\let\AND\algoAND}

\newtheorem{lemma}{Lemma}

\newtheorem{proposition}{Proposition}
\newtheorem{remark}{Remark}
\newtheorem{assumption}{Assumption}
\newtheorem{problem}{Problem}
\newtheorem{proof}{Proof}

\begin{document}

\begin{frontmatter}

\title{Sensor Selection for Remote State Estimation with QoS Requirement Constraints} 

\thanks[footnoteinfo]{This paper was not presented at any IFAC meeting. Corresponding author: Lingying~Huang.}
\thanks[footnoteinfo]{The work by H. Yang and L. Shi is supported by a Hong Kong RGC General Research Fund 16206620. The work by Y. Mo is supported by National Natural Science Foundation of China under grant no. 62273196.}

\author[hkust]{Huiwen Yang}\ead{hyangbr@connect.ust.hk}, 
\author[ntu]{Lingying Huang}\ead{lingying.huang@ntu.edu.sg},
\author[ecust]{Chao Yang}\ead{yangchao@ecust.edu.cn},  
\author[thu]{Yilin Mo}\ead{ylmo@tsinghua.edu.cn},
\author[hkust]{Ling Shi}\ead{eesling@ust.hk}

\address[hkust]{Department of Electronic and Computer Engineering, Hong Kong University of Science and Technology, Hong Kong, China}  
\address[ntu]{School of Electrical and Electronic Engineering, Nanyang Technological University, Singapore}             
\address[ecust]{Key Laboratory of Smart Manufacturing in Energy Chemical Process, Ministry of Education, East China University of Science and Technology, Shanghai, China}        
\address[thu]{Department of Automation and BNRist, Tsinghua University, Beijing, China}

\begin{keyword}                           
Remote state estimation, QoS requirement, SCA            
\end{keyword}                             

\begin{abstract}                          
	In this paper, we study the sensor selection problem for remote state estimation under the Quality-of-Service (QoS) requirement constraints. Multiple sensors are employed to observe a linear time-invariant system, and their measurements should be transmitted to a remote estimator for state estimation. However, due to the limited communication resources and the QoS requirement constraints, only some of the sensors can be allowed to transmit their measurements. To estimate the system state as accurately as possible, it is essential to select sensors for transmission appropriately.
    We formulate the sensor selection problem as a non-convex optimization problem. It is difficult to solve such a problem and even to find a feasible solution. To obtain a solution which can achieve good estimation performance, we first reformulate and relax the formulated problem. Then, we propose an algorithm based on successive convex approximation (SCA) to solve the relaxed problem. By utilizing the solution of the relaxed problem, we propose a heuristic sensor selection algorithm which can provide a good suboptimal solution. Simulation results are presented to show the effectiveness of the proposed heuristic.
\end{abstract}

\end{frontmatter}

\section{Introduction}
	Recently, many promising real-life applications such as smart city, agriculture, and transportation, have benefited from the development of cyber-physical systems (CPS) and Internet-of-Things (IoT). 
	Remote state estimation plays a very important role in wireless CPS, where sensors can transmit their measurements to a remote estimator via wireless channels and the remote estimator utilizes the received measurement data to estimate the states of a system monitored by the sensors. 
	As the manufacturing cost of sensors has been reduced, an increasing number of sensors have been deployed to expand surveillance coverage and thus provide access to richer data. 
	However, communication resources (spectrum, transmission power, etc.) are usually limited. As a result, it is impossible for a wireless communication system to permit the transmission of all the sensors at the same time, especially when the quantity of sensors is massive. 
	
	Existing works in the control community have extensively investigated the sensor scheduling and resource allocation for remote state estimation with limited communication resources~\cite{wang2019whittle, wu2020optimal}. 
	Wang et~al.~\cite{wang2019whittle} studied the multichannel allocation problem when the number of available channels is less than the number of sensors. 
	Wu et~al.~\cite{wu2020optimal} studied the optimal sensor scheduling policy when the communication channels have limited bandwidth. 
	These works formulated their sensor scheduling problems as a Markov decision process (MDP), which is computationally inefficient in multisensor scenarios due to the curse of dimensionality. To handle this problem, they proposed an index-based heuristic to provide asymptotically optimal policy instead of solving the MDP using some numerical algorithms such as value iteration and policy iteration. 
	Some other sensor scheduling problems are formulated as deterministic sensor selection problems~\cite{mo2011sensor, shi2013optimal, shi2013approximate, yang2015deterministic, asghar2017complete, huang2021joint}, which are usually non-convex due to the inevitable introduction of integer constraints. As a result, they either relaxed the non-convex problems to convex programming problems~\cite{mo2011sensor}, or proposed some periodic schedule heuristic, e.g., a branch-and-bound-based
	algorithm~\cite{shi2013optimal}, a dynamic programming based approach~\cite{shi2013approximate}, a greedy-based algorithm~\cite{asghar2017complete}, etc. 
	
    The works mentioned above all considered the scenarios where sensors only have two choices: whether transmit their measurements or not. However, in real communication systems, sensors can realize more flexible transmission by adjusting their transmission power. Meanwhile, the multiple access technique makes it possible to share a limited amount of radio spectrum among multiple users~\cite{tse2005fundamentals}, which facilitates the full utilization of communication resources and the promotion of transmission efficiency. One of the fundamental issues for the implementation of multiple access is interference management~\cite{hossain2014evolution}. Since users are allowed to access the same resource block at the same time, the transmission power of a sensor can cause interference with the transmission of other sensors.  
    {Li~et~al.~\cite{li2014multi} established a simple game framework for multi-sensor remote state estimation with an energy constraint. Li~et~al.~\cite{li2019power} formulated the power control problem for multi-sensor remote state estimation as a stochastic programming problem, where only non-negative transmission power constraints were considered. 
    Ding~et~al.~\cite{ding2021interference} investigated the interference management for a CPS with primary sensors and potential sensors, whose interaction was formulated as a non-cooperative game. The works mentioned above adopted the signal-to-interference-and-noise ratio (SINR) model to characterize the interference among sensors, and they all considered that the packet dropout rate is a non-increasing function in SINR. The packet dropout rate was used to calculate the expected estimation error covariance, which appeared in the objectives of the studied problems. 
    However, by utilizing modulation and coding, error-free decoding can be achieved if and only if the SINR at the receiver side is above a threshold, which is determined by the adopted modulation and coding schemes~\cite{dobre2011second}. Such a predefined threshold is called Quality-of-Service (QoS) requirement~\cite{li2007real}. 
    In this scenario, the reception of the signals transmitted by sensors becomes a deterministic process rather than a random process, which is significantly different from the previous problem formulations. 
   }
    
	In the communication community, researchers have made great efforts to investigate interference management problems with QoS requirement constraints~\cite{li2007real, zhao2017joint, tang2019user, yang2021joint, xia2021joint}. 
	As the mutual interferences among users degrade the SINR performance, a communication system probably is unable to ensure that all users can transmit their data successfully, i.e., not all users' QoS requirements can be satisfied, which is reflected in the fact that there may be no feasible solution for the optimization problem considering the satisfaction of all uses' QoS requirements. Therefore, it is necessary to select a subset of users, whose QoS requirements can be satisfied simultaneously, to transmit their data. 
	Tang and Feng~\cite{tang2019user} proposed a minimum-mean-square-error (MMSE) based user selection algorithm whose core idea is to minimize the gaps between users' achievable SINR and their QoS requirements. Later, Xia et~al.~\cite{xia2021joint} jointly selected users and designed the transmitter parameters based on a similar idea as~\cite{tang2019user}.
	However, they can only maximize the number of users whose QoS requirements can be satisfied, which is not necessary for remote state estimation with multiple sensors. The reason is that the number of selected sensors is not the determining factor of the estimation performance. Intuitively, it can be more worthy to select one sensor that has accurate measurements rather than select several sensors whose measurements are extremely inaccurate. Therefore, the existing user selection algorithms in the communication community are inapplicable for the sensor selection for remote state estimation. 
	
	In this paper, we consider the remote state estimation with real-time QoS requirement constraints. The objective of the remote estimator is to estimate the system state as accurately as possible by collecting measurement data from different sensors. However, due to the QoS constraints, the communication system may be unable to guarantee the transmission of all the sensors at the same time. Therefore, at each time slot, the sensors whose transmission is allowed should be selected according to the current estimation error covariance, channel states, and real-time QoS requirements. The main contributions of this paper are summarized as follows:
	\begin{itemize}
		\item[1)] We study the sensor selection problem for remote state estimation under the constraints of sensors' QoS requirements. This problem exists in some practical communication systems, but has not yet been considered in the existing works. 
        Due to the QoS requirement constraints, both continuous variables (transmission power) and discrete variables (decision of selection) should be optimized in the sensor selection problem. Moreover, the QoS requirements introduce more nonconvexity to the problem. Compared with the existing sensor selection problems where only discrete (binary) decision variables were optimized, our problem is more complicated and computationally expensive. {Compared with~\cite{tang2019user, yang2021joint, xia2021joint}, which only maximize the number of users, we take the accuracy of sensors' measurements into account. As a result, a matrix optimization variable, which has a coupling relationship with the discrete decision variables, is introduced to characterize the estimation error. Hence, it becomes challenging to solve the sensor selection problem.}
		\item[2)] We formulate the sensor selection problem as a non-convex optimization problem with equality of matrices and integer (binary) constraints, and provide a heuristic to solve the formulated problem. Specially, we first adopt linear relaxation to relax the binary constraints and transform the equality of matrices into a linear matrix inequality. Then, we adopt the successive convex approximation to transform the relaxed problem into a series of convex optimization problems. By utilizing the solution of the relaxed problem obtained by the algorithm based on SCA, we propose a heuristic sensor selection algorithm based on the concept of assimilated sensing precision matrix. Simulation results show that the proposed heuristic outperforms the existing methods.
	\end{itemize}


	{
		The remainder of this paper is organized as follows.
		Section \uppercase\expandafter{\romannumeral2} presents the mathematical setup and description of the problem. In Section \uppercase\expandafter{\romannumeral3}, an algorithm for solving the formulated problem is provided and analyzed. In Section \uppercase\expandafter{\romannumeral4}, the simulation results are presented to verify the effectiveness of the algorithm proposed in Section \uppercase\expandafter{\romannumeral3}. Section \uppercase\expandafter{\romannumeral5} concludes this paper and presents some future work.}
	
	{
		\emph{Notations:} 
		$\mathbb{R}$ is the set of real numbers, $\mathbb{R}^n$ is the $n$-dimensional Euclidean space, and $\mathbb{R}^{n\times m}$ is the set of real matrices with size $n\times m$. 
		For a matrix $X$, $X > 0$ $(X\geq 0)$ denotes $X$ is a positive definite (positive semidefinite) matrix. $\mathrm{Tr}\{\cdot\}$ denotes the trace of a matrix. 
		$\mathbb{E}[\cdot]$ is the expectation of a random variable. $\mathbb{E}[\cdot | \cdot]$ refers to conditional expectation.
	}
	
	\begin{figure}[]
		\centering
		\includegraphics[width=3.4in]{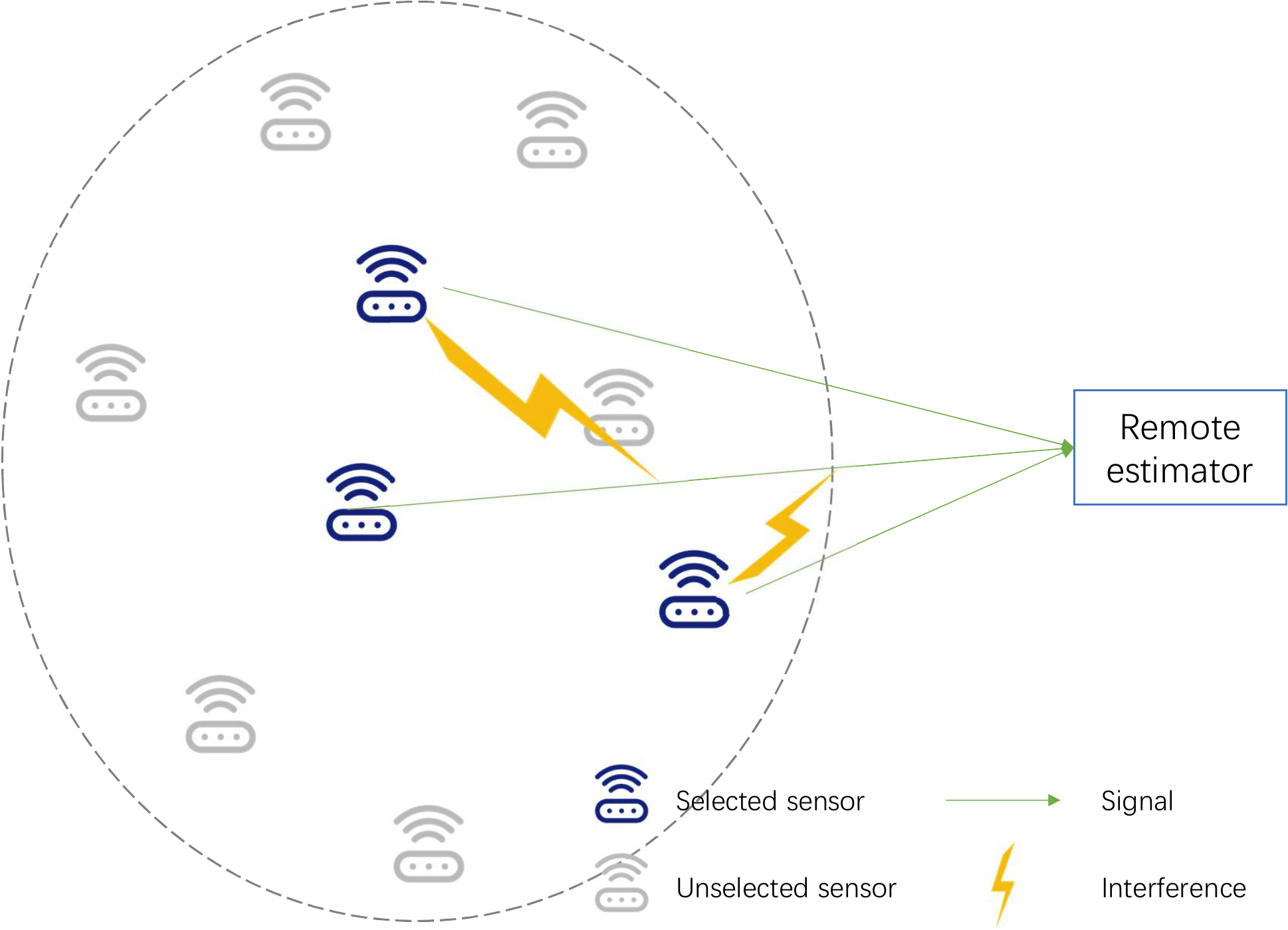}
		\caption{An instance of sensors' transmission}
		\label{system}
	\end{figure}
	\section{Problem Setup}
	\subsection{System Model}
	\begin{figure}[]
		\centering
		\includegraphics[width=3.4in]{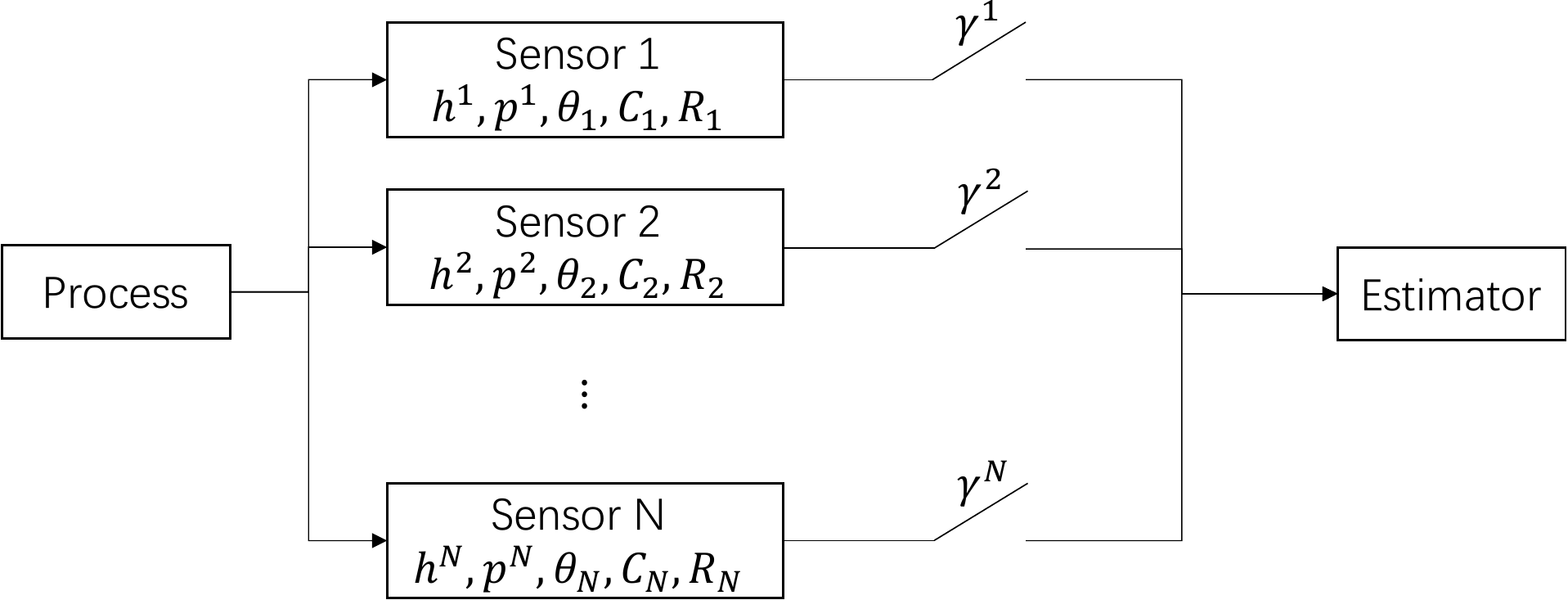}
		\caption{{System model}}
		\label{diagram}
	\end{figure}
	We consider a discrete-time linear time-invariant (LTI) system monitored by $N$ sensors:
	\begin{align}
	x_{k+1} &= A x_k + w_k,\\
	y_k^i &= C_i x_k + v_k^i, \quad i = 1, 2, \ldots, N,
	\end{align}
	where $x_k \in \mathbb{R}^n$ is the system state at time $k$ and $y_k^i\in\mathbb{R}^{m_i}$ is the measurement taken by sensor $i$.
	$w_k \in \mathbb{R}^n$ and $v_k^i\in\mathbb{R}^{m_i}, i = 1, 2, \ldots, N$ are independent zero-mean Gaussian noises with $\mathbb{E}[w_k w_l^T]=\delta_{kl}Q\ (Q\geq 0)$, $\mathbb{E}[v_k^i(v_l^j)^T]=\delta_{ij}\delta_{kl}R_i\ (R_i> 0)\quad \forall i,j\in\{1,2,\ldots,N\}$, and $\mathbb{E}[w_k (v_l^i)^T]=0\quad\forall i\in\in\{1,2,\ldots,N\},\quad k,l\in\mathbb{N}$. The initial state $x_0$ is Gaussian with mean $\bar{x}_0$ and covariance $P_0$. The pairs $(A,C_i), i= 1,2,\ldots,N$ are assumed to be observable and $(A,Q^{\frac{1}{2}})$ is controllable.

\subsection{Transmission Model}	
	In this system, the remote estimator and the sensors are equipped with a single antenna. Sensor $i$ can send its measurement $y_k^i$ to the remote estimator with transmission power $p_k^i\in[0,p^i_\mathrm{max}]$ over the fading channel denoted by $h_k^i$, which is modeled as
	\begin{equation}
		h_k^i = (T_k^i)^\frac{1}{2}g_k^i,
	\end{equation}
	where $p^i_\mathrm{max}$ is the maximum transmission power of sensor $i$, $g_k^i$ denotes the small scale fading at time $k$ and $T_k^i$ denotes the large scale fading, which is modeled as 
	\begin{equation}
		T_k^i = c_i f_k^i l_i,
	\end{equation}
	where $c_i$ is the path gain constant, $f_k^i$ is the shadow fading at time $k$, and $l_i$ is the pathloss.
	
	For sensor $i$, the signals transmitted by other sensors will at time $k$ be regarded as interferences. Consequently, the signal-to-interference-and-noise-ratio (SINR) of the signal transmitted by sensor $i$ at time $k$ is defined as
	\begin{equation}
		\mathrm{SINR}_k^i = \frac{h_k^i p_k^i}{\sum_{j\ne i}h_k^j p_k^j + \sigma^2},
	\end{equation}
	where $\sigma^2$ is the channel noise.
	
	To successfully decode the signal transmitted by sensor $i$, the following condition should be satisfied:
	\begin{equation}\label{qos}
	\mathrm{SINR}_k^i\geq \theta_i,
	\end{equation}
	where $\theta_i$ is the SINR quality-of-service (QoS) requirement for sensor $i$. Note that the condition (\ref{qos}) is widely used to guarantee successful data decoding in the wireless communication field~\cite{zhao2017joint}. 

	Due to the limited communication resources and the inter-inference between the sensors, the condition (\ref{qos}) may not be satisfied for each sensor. Therefore, at time $k$, it should be decided that which sensors should send their measurements. Define the transmission variable $\gamma_k^i$ as
	\begin{equation}
	\gamma_k^i = \left\{
	\begin{aligned}
		&1,\quad  \text{if $y_k^i$ is sent by sensor $i$ at time $k$,} \\
		&0, \quad \text{otherwise.}
		\end{aligned}
	\right.
	\end{equation}
	Furthermore, if sensor $i$ is selected, it should satisfy the condition (\ref{qos}) to guarantee a successful transmission; otherwise, there will be no constraint on the SINR of sensor $i$, i.e., the following condition should be satisfied:
	\begin{equation}
		\mathrm{SINR}_k^i\geq \gamma_k^i \theta_i.
	\end{equation}

  {
    \begin{assumption}\label{asm:feasibility}
        At each time step $k$, there exists at least one sensor $i$ such that
        \begin{equation*}
            \frac{h_k^i p^i_\mathrm{max}}{\sigma^2}\geq \theta_i.
        \end{equation*}
    \end{assumption}
    \begin{remark}
        Assumption~\ref{asm:feasibility} means there is at least one sensor's QoS requirement can be satisfied. Otherwise, none of the sensors can be selected, i.e., $\gamma_k^i = 0, \forall i$.
    \end{remark}
    }
\subsection{Estimation Process}
	After receiving the measurements from the sensors, the remote estimator runs a Kalman filter to obtain the minimum mean-squared error (MMSE) estimate of the system state $x_k$. Define $\boldsymbol{y}_k\triangleq (\gamma_k^1 (y_k^1)^T, \ldots, \gamma_k^N (y_k^N)^T)^T$	and let $\mathcal{L}_k\triangleq \{\boldsymbol{y}_1, \ldots, \boldsymbol{y}_k\}$. Then, define the priori estimate of {$x_k$} and its corresponding error covariance as follows:
	\begin{align}
		\hat{x}_{k|k-1}&\triangleq \mathbb{E}[x_k|\mathcal{L}_{k-1}],\\
		P_{k|k-1}&\triangleq \mathbb{E}[(x_k-\hat{x}_{k|k-1})(x_k-\hat{x}_{k|k-1})^T|\mathcal{L}_{k-1}],
	\end{align}
	and define the posteriori estimate of $x_k$ and its corresponding error covariance as follows:
	\begin{align}
		\hat{x}_{k|k}&\triangleq \mathbb{E}[x_k|\mathcal{L}_{k}],\\
		P_{k|k}&\triangleq \mathbb{E}[(x_k-\hat{x}_{k|k})(x_k-\hat{x}_{k|k})^T|\mathcal{L}_{k}].
	\end{align}
	The priori and posteriori estimates follow the below update steps:
	\begin{align}
		\hat{x}_{k|k-1}&=A\hat{x}_{k-1|k-1},\\
		P_{k|k-1}&=AP_{k-1|k-1}A^T+Q,\\
		\hat{x}_{k|k}&=\hat{x}_{k|k-1}+K_k(\boldsymbol{y}_k-\tilde{C}_k\hat{x}_{k|k-1}),\\
		P_{k|k}&=P_{k|k-1} - K_k\tilde{C}P_{k|k-1},\\
		K_k&=P_{k|k-1}\tilde{C}_k^T(\tilde{C}_k P_{k|k-1}\tilde{C}_k^T+\tilde{R}_k)^{\dagger},
	\end{align}
	where
	\begin{align*}
		\tilde{C}_k&\triangleq (\gamma_k^1 C_1^T,\ldots,\gamma_k^N C_N^T)^T,\\
		\tilde{R}_k&\triangleq \mathrm{diag}\{\gamma_k^1 R_1,\ldots,\gamma_k^N R_N\},
	\end{align*}
	and $\dagger$ represents the Moore-Penrose pseudo-inverse. 
    In the subsequent analysis, we will write $P_{k|k}$ as $P_k$ for simplicity whenever there is no confusion.
	{
	\begin{remark}
	    Note that the matrix $\tilde{C}_k P_{k|k-1}\tilde{C}_k^T+\tilde{R}_k$ is invertible if and only if $\gamma_k^i = 1, \forall i$. If not all sensors are selected at time $k$, i.e., $\exists i, \text{ s.t. } \gamma_k^i = 0$, the matrix $\tilde{C}_k P_{k|k-1}\tilde{C}_k^T+\tilde{R}_k$ will become a singular matrix. Since the Moore-Penrose pseudoinverse of an invertible matrix is its inverse, here we can use Moore-Penrose pseudoinverse to cover all cases~\cite{huang2021joint}.
	\end{remark}
	}
\subsection{Problem Description}
	In the following, we will formulate an optimization problem, where the transmission power of sensors and the selection variables are jointly optimized, to help select which sensors should transmit their measurements at time $k$. 
    {The set of candidate sensors $\mathcal{S}$ is initialized as $\{1,2,\ldots,N\}$, but some sensors should be remove from $\mathcal{S}$ due to the restriction of communication resources.}
	
    In practice, sensors usually have limited transmission power, i.e.,
	\begin{equation}
		p_k^i \in [0, p_\mathrm{max}^i],\quad \forall i\in\mathcal{S},
	\end{equation}
    The selection variables $\gamma_k^i (i=1,2,\ldots,N)$ are $0-1$ binary decision variables, i.e.,
	\begin{equation}
		\gamma_k^i\in\{0,1\},\quad\forall i\in\mathcal{S}.
	\end{equation}
	If a sensor is selected to transmit its measurement at time $k$, the corresponding SINR QoS requirement must be satisfied so that the transmitted measurement can be received by the remote estimator successfully, i.e., the following constraints should be satisfied:
	\begin{equation}\label{qos_constraint}
		\frac{h_k^i p_k^i}{\sum_{j\ne i}h_k^j p_k^j + \sigma^2}\geq \gamma_k^i \theta_i,\quad \forall i\in\mathcal{S}.
	\end{equation}
	
	{\begin{remark}
	    By utilizing modulation and coding, error-free decoding can be achieved if and only if the SINR at the receiver side is above a threshold, which is determined by the adopted modulation and coding schemes~\cite{dobre2011second}. Such a predefined threshold is called QoS requirement~\cite{li2007real}.
	\end{remark}
	}
	
	At time $k$, the system objective is to minimize the estimation error, which can be characterized as minimizing the trace of the error covariance of the state estimate $P_k$. Therefore, we have the following optimization problem:
	\begin{problem}\label{origin_problem}
		{\begin{align*}
		\min_{\boldsymbol{\gamma}_k, \boldsymbol{p}_k}\quad &\mathrm{Tr}\{P_k\}\\
		\mathrm{s.t.}\quad
		& \frac{h_k^i p_k^i}{\sum_{j\ne i}h_k^j p_k^j + \sigma^2}\geq \gamma_k^i \theta_i,\quad \forall i\in\mathcal{S},\\
		&p_k^i \in [0, p_\mathrm{max}^i],\quad \forall i\in\mathcal{S},\\
		&\gamma_k^i\in\{0,1\},\quad\forall i\in\mathcal{S},\\
        & P_{k}=P_{k|k-1} - K_k\tilde{C}_kP_{k|k-1},\\
		& K_k=P_{k|k-1}\tilde{C}_k^T\left(\tilde{C}_k P_{k|k-1}\tilde{C}_k^T+\tilde{R}\right)^{\dagger},
		\end{align*}}
        {where $\boldsymbol{\gamma}_k\triangleq \left[\gamma_k^1, \gamma_k^2, \ldots, \gamma_k^N \right]$, $\boldsymbol{p}_k \triangleq\left[p_k^1, p_k^2, \ldots, p_k^N \right]$, and $P_{k|k-1}=AP_{k-1}A^T+Q$. Moreover, $P_{k-1}$ and $h_k^i, \forall i$ are known by the estimator at time $k$. }
	\end{problem}
     {
     \begin{remark}
        In practice, the channel state information (CSI) $h_k^i, \forall i$  can be obtained by the estimator through channel estimation~\cite{coleri2002channel}, which can be accomplished by transmitting pilot sequences from each sensor.
     \end{remark}
     }                          
\section{Main Results}\label{sec:problem_solving}
	Due to the binary and non-convex constraints, it is extremely difficult to derive the optimal solution of Problem~\ref{origin_problem}. In this section, we will propose an efficient algorithm, which can obtain a suboptimal solution of Problem~\ref{origin_problem}. Fig.~\ref{relationship} illustrates the relationship among the main results derived in this section.
	\begin{figure}[h]
		\centering
		\includegraphics[width=3.4in]{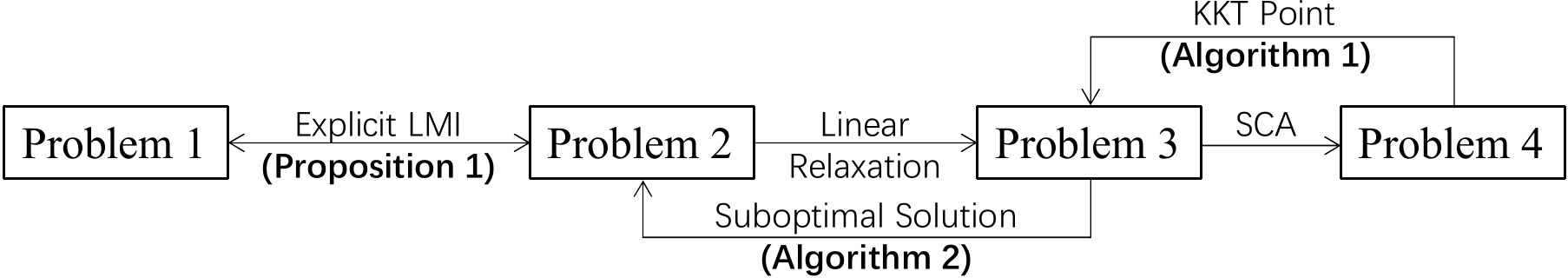}
		\caption{{Relationship among the main results}}
		\label{relationship}
	\end{figure}
\subsection{Problem Reformulation and Relaxation}
	We first eliminate the non-convexity of $P_{k}=P_{k|k-1} - K_k\tilde{C}P_{k|k-1}$ in the following Proposition~\ref{prop:1}:
	\begin{proposition}\label{prop:1}
		Problem~\ref{origin_problem} is equivalent to the following problem:
		\begin{problem}\label{nonconvex_problem}
			\begin{align*}
			\min_{\boldsymbol{\gamma}_k, \boldsymbol{p}_k, P_k} &\mathrm{Tr}\{P_k\}\\
			\mathrm{s.t.}\quad
			&\begin{bmatrix}
			(AP_{k-1}A^T+Q)^{-1} + \sum_{i=1}^{N}\gamma_k^i C_i^TR_i^{-1}C_i & I\\
			I & P_k
			\end{bmatrix}\\&\geq 0,\\
			& \frac{h_k^i p_k^i}{\sum_{j\ne i}h_k^j p_k^j + \sigma^2}\geq \gamma_k^i \theta_i,\quad \forall i\in\mathcal{S},\\
			&p_k^i \in [0, p_\mathrm{max}^i],\quad \forall i\in\mathcal{S},\\
			&\gamma_k^i\in\{0,1\},\quad\forall i\in\mathcal{S}.
			\end{align*}
		\end{problem}
	\end{proposition}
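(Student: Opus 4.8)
The plan is to eliminate the nonconvex Kalman update by first rewriting the posterior covariance in \emph{information form} and then relaxing the resulting matrix equality to a linear matrix inequality through the Schur complement. I would begin by converting the covariance-form update $P_k = P_{k|k-1} - K_k\tilde{C}_k P_{k|k-1}$ into the information form
\[
P_k^{-1} = P_{k|k-1}^{-1} + \tilde{C}_k^T\tilde{R}_k^{\dagger}\tilde{C}_k.
\]
Normally this follows from the matrix inversion (Woodbury) identity, but here $\tilde{R}_k$ is singular whenever some $\gamma_k^i = 0$, so the identity cannot be applied directly. To handle this I would partition the sensors into the selected set $\{i : \gamma_k^i = 1\}$ and the unselected set $\{i : \gamma_k^i = 0\}$: the unselected blocks of $\tilde{C}_k$ and $\tilde{R}_k$ vanish and contribute nothing, while on the selected block $\tilde{R}_k$ is positive definite and the classical identity applies. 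This is precisely the pseudo-inverse convention noted in the earlier remark and justified in~\cite{huang2021joint}.

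Next, exploiting $\gamma_k^i \in \{0,1\}$, I would simplify the correction term. Since $(\gamma_k^i R_i)^{\dagger} = \gamma_k^i R_i^{-1}$ for binary $\gamma_k^i$, the $i$-th block of $\tilde{C}_k$ is $\gamma_k^i C_i$, and
\[
\tilde{C}_k^T\tilde{R}_k^{\dagger}\tilde{C}_k = \sum_{i=1}^{N} (\gamma_k^i)^3 C_i^T R_i^{-1} C_i = \sum_{i=1}^{N} \gamma_k^i C_i^T R_i^{-1} C_i,
\]
using $(\gamma_k^i)^3 = \gamma_k^i$. Substituting $P_{k|k-1} = AP_{k-1}A^T + Q$ then gives $P_k^{-1} = M$, where $M \triangleq (AP_{k-1}A^T+Q)^{-1} + \sum_{i=1}^{N} \gamma_k^i C_i^T R_i^{-1} C_i$.

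I would then relax the equality $P_k = M^{-1}$ to the inequality $P_k \geq M^{-1}$. This relaxation is tight because the objective $\mathrm{Tr}\{P_k\}$ is strictly monotone with respect to the positive semidefinite order: any feasible triple with $P_k \geq M^{-1}$ and $P_k \ne M^{-1}$ can be replaced by $P_k = M^{-1}$, which remains feasible and yields a strictly smaller trace, since a nonzero positive semidefinite matrix has strictly positive trace. Hence every optimizer of the relaxed problem attains $P_k = M^{-1}$, recovering the exact Kalman covariance of Problem~\ref{origin_problem}, and the optimal values coincide. Finally, because $M > 0$ (the prior covariance is positive definite by assumption), the Schur complement lemma shows that $P_k \geq M^{-1}$ is equivalent to
\[
\begin{bmatrix} M & I\\ I & P_k \end{bmatrix} \geq 0,
\]
which is exactly the LMI constraint of Problem~\ref{nonconvex_problem}. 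Since the QoS, power, and binary constraints are identical in both problems, this establishes the equivalence claimed in Proposition~\ref{prop:1}.

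The main obstacle will be the first step: rigorously justifying the information-form identity in the singular case. The Woodbury identity fails when $\tilde{R}_k$ is not invertible, so the argument must rely on the block structure of the measurement model together with the pseudo-inverse convention, and one must verify that the selected-sensor subsystem indeed has an invertible innovation covariance $\tilde{C}_k P_{k|k-1}\tilde{C}_k^T + \tilde{R}_k$ restricted to the selected block, so that the reduction to the classical identity is valid.
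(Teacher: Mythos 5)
Your proposal follows essentially the same route as the paper's proof: the paper likewise reduces the pseudo-inverse Kalman update to the selected-sensor block (via stacked matrices $\hat{C}_k$, $\hat{R}_k$ over the indices with $\gamma_k^i=1$), applies the matrix inversion lemma there to obtain $P_k = \bigl((AP_{k-1}A^T+Q)^{-1}+\sum_{i=1}^{N}\gamma_k^i C_i^T R_i^{-1} C_i\bigr)^{-1}$, relaxes the equality to $P_k \geq M^{-1}$ with exactly your trace-monotonicity tightness argument, and converts to the LMI via the Schur complement under $M>0$. The only cosmetic difference is that you simplify $\tilde{C}_k^T\tilde{R}_k^{\dagger}\tilde{C}_k$ directly using $(\gamma_k^i)^3=\gamma_k^i$ rather than forming the reduced matrices explicitly, which is equivalent.
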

\begin{proof}
    {
    First of all, we introduce two lemmas:
    \begin{lemma}[Matrix Inversion Lemma]\label{lm:inversion}
        Let $A$ and $D$ be square, invertible matrices of size $n_A\times n_A$ and $n_D \times n_D$, and $B$ and $C$ be matrices of size $n_A\times n_D$ and $n_D\times n_A$, the following equation holds:
        \begin{equation*}
            (A+BD^{-1}C)^{-1} = A^{-1} - A^{-1}B(D-CA^{-1}B)^{-1}CA^{-1}.
        \end{equation*}
    \end{lemma}
    \begin{lemma}[Schur Complement]\label{lm:schur}
            Let $S$ be a symmetric matrix with the following formation
            $$S = 
            \begin{bmatrix}
                A & B \\
                B^T & C
            \end{bmatrix}.$$
            Then, if $A$ is positive definite, we have
            $$S\geq 0 \iff C - B^T A^{-1}B \geq 0.$$
        \end{lemma}
    Define 
    \begin{align*}
        \hat{C}_k &= \left(C_{i_1}^T, C_{i_2}^T, \ldots, C_{i_m}^T \right)^T,\\
        \hat{R}_k &= \mathrm{diag}\{ R_{i_1}, R_{i_2}, \ldots, R_{i_{m_k}} \},
    \end{align*}
    where $i_m\in\{i\in\mathcal{S}|\gamma_k^i = 1\}, \forall m\in\{1,2,\ldots, m_k\}$. 
    According to the definition of Moore-Penrose pseudo-inverse, we have
    \begin{equation}
    \begin{split}
         \tilde{C}_k^T&\left(\tilde{C}_k P_{k|k-1}\tilde{C}_k^T+\tilde{R}_k\right)^{\dagger}\tilde{C}_k \\&= \hat{C}_k^T\left(\hat{C}_k P_{k|k-1}\hat{C}_k^T+\hat{R}_k\right)^{-1}\hat{C}_k,
    \end{split}
    \end{equation}
    and hence we have
    \begin{equation}
        \begin{split}
        P_{k}&=P_{k|k-1} - K_k\tilde{C}_k P_{k|k-1}\\
        &= P_{k|k-1} - P_{k|k-1} \tilde{C}_k^T\left(\tilde{C}_k P_{k|k-1}\tilde{C}_k^T+\tilde{R}_k\right)^{\dagger} \tilde{C}_kP_{k|k-1}\\
        &= P_{k|k-1} - P_{k|k-1} \hat{C}_k^T\left(\hat{C}_k P_{k|k-1}\hat{C}_k^T+\hat{R}_k\right)^{-1} \hat{C}_kP_{k|k-1}\\
        &= \left(P_{k|k-1}^{-1}+ \hat{C}_k^T\hat{R}_k^{-1}\hat{C}_k\right)^{-1}\\
        &= \left(\left(AP_{k-1}A^T+Q\right)^{-1}+\sum_{i=1}^{N}\gamma_k^i C_i^TR_i^{-1}C_i\right)^{-1},
        \end{split}
    \end{equation}
        where the fourth equality follows from Matrix Inversion Lemma. 
        Then, Problem~\ref{origin_problem} can be rewritten as 
        \begin{align*}
		\min_{\boldsymbol{\gamma}_k, \boldsymbol{p}_k} \quad &\mathrm{Tr}\{P_k\}\\
		\mathrm{s.t.}\quad& P_k = \left((AP_{k-1}A^T+Q)^{-1}+\sum_{i=1}^{N}\gamma_k^i C_i^TR_i^{-1}C_i\right)^{-1},\\
		& \frac{h_k^i p_k^i}{\sum_{j\ne i}h_k^j p_k^j + \sigma^2}\geq \gamma_k^i \theta_i,\quad \forall i\in\mathcal{S},\\
		&p_k^i \in [0, p_\mathrm{max}^i],\quad \forall i\in\mathcal{S},\\
		&\gamma_k^i\in\{0,1\},\quad\forall i\in\mathcal{S}.
		\end{align*}
        which is equivalent to the following problem:
        \item[\quad\textbf{Problem~2.1:}] 
            \begin{align*}
		\min_{\boldsymbol{\gamma}_k, \boldsymbol{p}_k, P_k} &\mathrm{Tr}\{P_k\}\\
		\mathrm{s.t.}\quad& P_k \geq \left((AP_{k-1}A^T+Q)^{-1}+\sum_{i=1}^{N}\gamma_k^i C_i^TR_i^{-1}C_i\right)^{-1},\\
		& \frac{h_k^i p_k^i}{\sum_{j\ne i}h_k^j p_k^j + \sigma^2}\geq \gamma_k^i \theta_i,\quad \forall i\in\mathcal{S},\\
		&p_k^i \in [0, p_\mathrm{max}^i],\quad \forall i\in\mathcal{S},\\
		&\gamma_k^i\in\{0,1\},\quad\forall i\in\mathcal{S}.
		\end{align*} 	
        The equivalence is derived from the fact that the constraint $P_k \geq \left((AP_{k-1}A^T+Q)^{-1}+\sum_{i=1}^{N}\gamma_k^i C_i^TR_i^{-1}C_i\right)^{-1}$ in Problem~2.1 must be satisfied with equality. Otherwise, we can always decrease the objective value by 
        replacing $P_k$ with $\left((AP_{k-1}A^T+Q)^{-1}+\sum_{i=1}^{N}\gamma_k^i C_i^TR_i^{-1}C_i\right)^{-1}$.
        }
        
        {
        Due to $Q\geq 0$, $R_i> 0, \forall i$, and Assumption~\ref{asm:feasibility}, we have
        \begin{equation}
        \begin{split}
           (AP_{k-1}A^T+Q)^{-1}+\sum_{i=1}^{N}\gamma_k^i C_i^TR_i^{-1}C_i >0.
        \end{split}
        \end{equation}
        Similar to~\cite{8920188}, by Lemma~\ref{lm:schur}, we have $P_k \geq \left((AP_{k-1}A^T+Q)^{-1}\right.$ $\left.+\sum_{i=1}^{N}\gamma_k^i C_i^TR_i^{-1}C_i\right)^{-1}$ is equivalent to
		\begin{equation}\label{LMI}
		\begin{bmatrix}
		(AP_{k-1}A^T+Q)^{-1} + \sum_{i=1}^{N}\gamma_k^i C_i^TR_i^{-1}C_i & I\\
		I & P_k
		\end{bmatrix}\geq 0.
		\end{equation}
		The proof is thus completed.
        }
\end{proof}

	To deal with the integer (binary) constraints, we adopt the linear relaxation, which is often considered in the literature~\cite{porta2009linear,shi2013optimal,huang2021joint}. Specifically, we replace the constraints $\gamma_k^i\in\{0,1\}, \forall i$ with $\gamma_k^i\in[0,1],\forall i$. Then, we have the following relaxed problem:
	\begin{problem}\label{relaxed_problem}
		\begin{align*}
		\min_{\boldsymbol{\gamma}_k, \boldsymbol{p}_k, P_k} &\mathrm{Tr}\{P_k\}\\
		\mathrm{s.t.}\quad&
		\begin{bmatrix}
		(AP_{k-1}A^T+Q)^{-1} + \sum_{i=1}^{N}\gamma_k^i C_i^TR_i^{-1}C_i & I\\
		I & P_k
		\end{bmatrix}\\&\geq 0,\\
		& \frac{h_k^i p_k^i}{\sum_{j\ne i}h_k^j p_k^j + \sigma^2}\geq \gamma_k^i \theta_i,\quad \forall i\in\mathcal{S},\\
		&p_k^i \in [0, p_\mathrm{max}^i],\quad \forall i\in\mathcal{S},\\
		&\gamma_k^i\in[0,1],\quad\forall i\in\mathcal{S}.
		\end{align*}
	\end{problem}

	\begin{remark}
		Note that after the linear relaxation for variables $\gamma_k^i$, the constraint (\ref{LMI}) is a linear matrix inequality (LMI), which is convex. However, the non-convexity of the constraint (\ref{qos_constraint}) has not been eliminated yet. Therefore, further efforts should be made to deal with this issue.
	\end{remark}

\subsection{Algorithm for Solving Problem~\ref{relaxed_problem}}
	As discussed above, Problem~\ref{relaxed_problem} is still non-convex due to the constraint (\ref{qos_constraint}). In this subsection, we propose an iterative algorithm based on SCA technique~\cite{park2015multihop, liu2017cross, liu2018transceiver}, where a series of convex problems will be solved to approximate the optimal solution of the non-convex Problem~\ref{relaxed_problem}.
	{In each iteration, a convex surrogate problem will be constructed based on the optimal solution obtained in the last iteration. Then, the new surrogate problem will be solved in the subsequent iteration. Eventually, the solution of the surrogate problem will converge to a solution satisfying the KKT conditions of Problem~\ref{relaxed_problem}. In the following, we will show how to construct such surrogate problems.}
	
	By introducing a set of auxiliary variables $\{\eta_k^i, \forall i\}$, the constraint (\ref{qos_constraint}) can be equivalently rewritten as the following two constraints:
	\begin{align}
		 \sum_{j\ne i}h_k^j p_k^j + \sigma^2&\leq \eta_k^i,\quad\forall i,\label{c1}\\
		\eta_k^i \gamma_k^i\theta_i&\leq h_k^i p_k^i,\quad\forall i.\label{c2}
	\end{align}
	The constraint (\ref{c1}) is a convex (linear) constraint, but the constraint (\ref{c2}) is still non-convex. 
	{From the fact $4\eta^i_k\gamma^i_k\leq 4\eta^i_k\gamma^i_k+(\eta^i_k-\gamma^i_k-(b^i_k)^{(t)})^2$, we can see that the non-convex term $\eta_k^i \gamma_k^i$ has the following convex bound:}
	\begin{equation}\label{eq:approximation}
		\eta_k^i \gamma_k^i\leq \frac{(\eta_k^i+\gamma_k^i)^2-2(b_k^i)^{(t)}(\eta_k^i-\gamma_k^i)+((b_k^i)^{(t)})^2}{4},
	\end{equation}
	where $(b_k^i)^{(t)}=(\eta_k^i)^{(t)}-(\gamma_k^i)^{(t)}$, and $(\eta_k^i)^{(t)}$ and $(\gamma_k^i)^{(t)}$ are the feasible points in the $t$-th iteration.
	Then Problem~\ref{relaxed_problem} can be approximated by the following convex surrogate problem in the {$(t+1)$}-th iteration:
	\begin{problem}\label{convex_problem}
		\begin{align*}
		\min_{\{\boldsymbol{\gamma}_k, \boldsymbol{\eta}_k, \boldsymbol{p}_k, \atop P_k\}^{(t+1)}} &\mathrm{Tr}\{P_k\}\\
		\mathrm{s.t.}\quad
		& \begin{bmatrix}
		(AP_{k-1}A^T+Q)^{-1} + \sum_{i=1}^{N}\gamma_k^i C_i^TR_i^{-1}C_i & I\\
		I & P_k
		\end{bmatrix}\\&\geq 0,\\
		& \sum_{j\ne i}h_k^j p_k^j + \sigma^2\leq \eta_k^i,\quad\forall i\in\mathcal{S},\\
		&\left(\eta_k^i+\gamma_k^i\right)^2-2{\left(b_k^i\right)^{(t)}}\left(\eta_k^i-\gamma_k^i\right)+\left( {\left(b_k^i\right)^{(t)}} \right)^2\\&\leq \frac{4}{\theta_i}h_k^i p_k^i, \quad\forall i\in\mathcal{S},\\
		&p_k^i \in [0, p_\mathrm{max}^i],\quad\forall i\in\mathcal{S},\\
		&\gamma_k^i\in[0,1],\quad\forall i\in\mathcal{S}.
		\end{align*}
	\end{problem}
	The SCA-based algorithm for solving Problem~\ref{relaxed_problem} is summarized in Algorithm~\ref{algo:sca}:
	\begin{algorithm}
			\caption{SCA-based algorithm for solving Problem~\ref{relaxed_problem}}
		\begin{algorithmic}\label{algo:sca}
			\STATE Set the initial values for $\{\boldsymbol{\gamma}_k,\boldsymbol{\eta}_k,\boldsymbol{p}_k,P_k\}^{(0)}$ and set {$t=0$};
			\REPEAT
			\STATE Update {$(b_k^i)^{(t)}=(\eta_k^i)^{(t)}-(\gamma_k^i)^{(t)}$};
                \STATE Solve Problem~\ref{convex_problem} to obtain {$\{\boldsymbol{\gamma}_k,\boldsymbol{\eta}_k,\boldsymbol{p}_k,P_k\}^{(t+1)}$}; 
			\STATE $t= t+1$;
			\UNTIL{converge}
		\end{algorithmic}
	\end{algorithm}
	\begin{proposition}\label{prop:convergence}
		Monotonic convergence of Algorithm~\ref{algo:sca} is guaranteed. Moreover, the converged solution satisfies all the constraints as well as the KKT conditions of Problem~\ref{relaxed_problem}.
	\end{proposition}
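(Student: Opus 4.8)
The plan is to invoke the standard machinery of successive convex approximation, whose convergence rests on three structural properties of the surrogate constraint in Problem~\ref{convex_problem}. Writing $g^i(\eta_k^i,\gamma_k^i)\triangleq \eta_k^i\gamma_k^i$ for the non-convex term in (\ref{c2}) and $\hat g^i(\eta_k^i,\gamma_k^i;(b_k^i)^{(t)})$ for the right-hand side of (\ref{eq:approximation}), I would first record that: (i) $\hat g^i$ is jointly convex in $(\eta_k^i,\gamma_k^i)$ and $\hat g^i\geq g^i$ everywhere, since it is obtained by keeping the convex term $(\eta_k^i+\gamma_k^i)^2/4$ and replacing the concave term $-(\eta_k^i-\gamma_k^i)^2/4$ by its first-order tangent over-estimate at $\eta_k^i-\gamma_k^i=(b_k^i)^{(t)}$; (ii) the bound is tight at the construction point, i.e. $\hat g^i=g^i$ whenever $\eta_k^i-\gamma_k^i=(b_k^i)^{(t)}$, which holds precisely at $\mathbf{z}^{(t)}\triangleq\{\boldsymbol{\gamma}_k,\boldsymbol{\eta}_k,\boldsymbol{p}_k,P_k\}^{(t)}$ because $(b_k^i)^{(t)}=(\eta_k^i)^{(t)}-(\gamma_k^i)^{(t)}$; and (iii) the gradients coincide there, $\nabla\hat g^i=\nabla g^i$ at $\mathbf{z}^{(t)}$, as the tangent substitution is first-order accurate. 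An immediate consequence of (i) is that the feasible set of Problem~\ref{convex_problem} is contained in that of Problem~\ref{relaxed_problem}: any point meeting $\hat g^i\theta_i\le h_k^ip_k^i$ also meets (\ref{c2}), while the constraints (\ref{LMI}), (\ref{c1}) and the box constraints are identical and independent of $t$.

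Next I would establish monotonic convergence of the objective. By property (ii), the previous iterate $\mathbf{z}^{(t)}$ satisfies the $(t+1)$-th surrogate constraint with equality and is therefore feasible for Problem~\ref{convex_problem} in that iteration; since $\mathbf{z}^{(t+1)}$ is its minimizer and the objective $\mathrm{Tr}\{P_k\}$ is common to all iterations, $\mathrm{Tr}\{P_k^{(t+1)}\}\le\mathrm{Tr}\{P_k^{(t)}\}$, so the objective sequence is non-increasing. The LMI (\ref{LMI}) together with $Q\ge 0$, $R_i>0$ and Assumption~\ref{asm:feasibility} forces $P_k\ge\bigl((AP_{k-1}A^T+Q)^{-1}+\sum_i C_i^TR_i^{-1}C_i\bigr)^{-1}>0$, so $\mathrm{Tr}\{P_k\}$ is bounded below; a non-increasing sequence bounded below converges, proving the monotonic convergence claim. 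Boundedness of the iterates follows by taking the minimizer $\mathbf{z}^{(t+1)}$ with each $\eta_k^i$ attaining equality in (\ref{c1}) (enlarging $\eta_k^i$ leaves the objective unchanged and only tightens the surrogate), so $\boldsymbol{\eta}_k,\boldsymbol{p}_k,\boldsymbol{\gamma}_k$, and hence $P_k$, range over a compact set, and a convergent subsequence $\mathbf{z}^{(t_\ell)}\to\mathbf{z}^\star$ exists.

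It remains to show $\mathbf{z}^\star$ satisfies the KKT conditions of Problem~\ref{relaxed_problem}. Feasibility is immediate: every iterate is feasible for Problem~\ref{convex_problem} and hence, by the containment above, for Problem~\ref{relaxed_problem}, whose feasible set is closed; thus $\mathbf{z}^\star$ is feasible. For stationarity I would argue that $\mathbf{z}^\star$ is a fixed point of the map $\mathbf{z}^{(t)}\mapsto\mathbf{z}^{(t+1)}$, so $\mathbf{z}^\star$ is an optimal, hence KKT, point of the convex Problem~\ref{convex_problem} built at $(b_k^i)^\star=(\eta_k^i)^\star-(\gamma_k^i)^\star$. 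Writing down those KKT relations and using property (iii)---that the surrogate constraint and the original constraint (\ref{c2}) share the same value and gradient at $\mathbf{z}^\star$---converts the surrogate stationarity and complementary-slackness conditions into exactly the KKT system of Problem~\ref{relaxed_problem}, with the same multipliers.

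The delicate point, and the step I expect to be the main obstacle, is this last transfer of stationarity for the non-convex constraint (\ref{c2}). It requires (a) a constraint qualification at $\mathbf{z}^\star$---e.g. Slater's condition, which holds for the surrogate problems under Assumption~\ref{asm:feasibility} since they admit strictly feasible points---to guarantee that the multiplier sequence is bounded and therefore convergent along the subsequence, and (b) passing to the limit in the stationarity equation, which is legitimate by continuity of the gradients of all constraint functions. The linear and LMI constraints, being identical in Problems~\ref{relaxed_problem} and~\ref{convex_problem}, transfer verbatim, so property (iii) is needed only to align the contribution of (\ref{c2}); this isolates the genuine difficulty to the multiplier convergence and the fixed-point characterization of $\mathbf{z}^\star$.
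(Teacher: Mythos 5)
Your monotonic-convergence half coincides with the paper's argument: the tightness of the bound (\ref{eq:approximation}) at the expansion point makes the $t$-th optimizer feasible for the $(t+1)$-th surrogate problem, so the common objective $\mathrm{Tr}\{P_k\}$ is non-increasing, and boundedness below (the paper simply uses $P_k\geq 0$ together with Lemma~\ref{lm:mono_converge}) yields convergence of the objective values. Your verification of the three surrogate properties --- global upper bound, tightness at $(b_k^i)^{(t)}=(\eta_k^i)^{(t)}-(\gamma_k^i)^{(t)}$, and gradient matching, obtained by tangent-linearizing the concave part $-(\eta_k^i-\gamma_k^i)^2/4$ of $\eta_k^i\gamma_k^i$ --- is exactly condition 1) in the paper's appendix, and you spell it out more explicitly than the paper's ``easy to verify.''

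The gap is in the KKT half. The paper does not re-derive a limit argument: it invokes the general inner-approximation convergence theorem of Marks and Wright \cite{marks1978general}, whose hypotheses are precisely your properties (i)--(iii) plus Slater's condition for the surrogate problems, and it discharges Slater by an explicit construction of a strictly feasible point (scaling the powers, $\hat{p}_k^i=\phi_k^i(p_k^i)^{(t-1)}$ with $\phi_k^i<1$, together with a perturbation $\hat{\gamma}_k^i<(\gamma_k^i)^{(t-1)}$). You instead attempt to prove the theorem from scratch, and the step you yourself flag as ``the main obstacle'' is where the argument breaks. From a convergent subsequence $\mathbf{z}^{(t_\ell)}\to\mathbf{z}^\star$ you assert that $\mathbf{z}^\star$ is a fixed point of the SCA map, but this does not follow: the objective $\mathrm{Tr}\{P_k\}$ is linear, so the minimizers of Problem~\ref{convex_problem} are in general non-unique and the argmin map is not continuous; subsequential convergence of $\mathbf{z}^{(t_\ell)}$ says nothing about $\mathbf{z}^{(t_\ell+1)}$, and without $\|\mathbf{z}^{(t+1)}-\mathbf{z}^{(t)}\|\to 0$ (which would need strict convexity, a proximal regularization, or a separate estimate) the fixed-point claim is unsupported. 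Likewise, your multiplier step needs uniform boundedness of the KKT multipliers along the subsequence, i.e.\ a Slater point that does not degenerate in the limit; you assert this ``under Assumption~\ref{asm:feasibility},'' but that assumption only guarantees one sensor's QoS feasibility and does not by itself produce strict feasibility of the quadratic surrogate constraints --- that is precisely what the paper's explicit construction supplies. The economical repair is the paper's route: keep your properties (i)--(iii), add the strict-feasibility construction, and cite \cite{marks1978general} rather than reproving it.
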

	\begin{proof}
		See Appendix.
	\end{proof}
	\begin{remark}\label{rmk:initialization}
		{The initial values of $\{\boldsymbol{\gamma}_k,\boldsymbol{\eta}_k,\boldsymbol{p}_k,P_k\}^{(0)}$ must be feasible for Problem~\ref{relaxed_problem}. We can simply set $\gamma^i_k=0, p^i_k=0, \eta^i_k=\sigma^2, \forall i$ and $P_k=AP_{k-1}A^T+Q$. Note that the initial values will influence the number of iterations. }
	\end{remark}
 \textcolor{black}{
    \begin{remark}
        Generally, a KKT point of a non-convex problem can be a global minimum, a local minimum, a saddle point, or even a maximum. 
        Usually, obtaining the global minimum of a non-convex problem is considered NP-hard. Therefore, the majority of efforts in solving non-convex problems aim to obtain a local minimum~\cite{tang2019user, yang2021joint}. 
        According to~\cite[Corollary 1]{marks1978general}, the solution obtained by Algorithm~\ref{algo:sca} is not only a KKT point of Problem~\ref{relaxed_problem}, but also a local minimum to Problem~\ref{relaxed_problem}.
        The local minimum obtained by Algorithm~\ref{algo:sca} may vary with the initial point. In Section~\ref{sec:simulation}, the results were derived with the initial values mentioned in Remark~\ref{rmk:initialization}. 
    \end{remark}
 }
\subsection{Heuristic for Sensor Selection}
	So far, Problem~\ref{relaxed_problem} has been solved by Algorithm~\ref{algo:sca}. However, the values of the transmission variables $\gamma_k^i, \forall i$ obtained by solving Problem~\ref{relaxed_problem} are continuous rather than binary due to the linear relaxation. That is to say, there is still a gap before we obtain the solution to the original problem, i.e., Problem~\ref{origin_problem}. 
	
	Based on the above analysis, there is no doubt that it is difficult to find a feasible solution to Problem~\ref{origin_problem}, let alone to find the global optimal solution. Therefore, in this subsection, we will provide a heuristic to give a suboptimal solution to Problem~\ref{origin_problem}.
	
	Define two functions $h:\mathbb{S}_n^{+}\rightarrow\mathbb{S}_n^{+}$ and $g(X;S):\mathbb{S}_n^{+}\times\mathbb{S}_n^{+}\rightarrow\mathbb{S}_n^{+}$:
	\begin{align}
		h(X) &\triangleq AXA^T+Q,\\
		g(X;S) &\triangleq ([h(X)]^{-1}+S)^{-1}.
	\end{align}
	It is easy to see that $g(X;S)$ is matrix monotonically decreasing with respect to $S$ in $\mathbb{S}_n^{+}$. 
    By defining $\gamma_k^i C_i^T R_i^{-1} C_i$ as the \textit{assimilated sensing precision matrix} for sensor $i$, based on the fact that $P_k=g\left(P_{k-1},\sum_{i=1}^{N}\gamma_k^i C_i^T R_i^{-1} C_i\right)$, we can heuristically select the sensors which have ``larger'' assimilated sensing precision matrices. 
    {In the following, we take the trace of the assimilated sensing precision matrix, i.e., $\mathrm{Tr}\{\gamma_k^i C_i^T R_i^{-1} C_i\}$, as the selection metric. We say that the assimilated sensing precision matrix of sensor $i$ is ``larger''  than that of sensor $j$ if $\mathrm{Tr}\{\gamma_k^i C_i^T R_i^{-1} C_i\} > \mathrm{Tr}\{\gamma_k^j C_j^T R_j^{-1} C_j\}$.}
    
        The intuition of using $\mathrm{Tr}\{\gamma_k^i C_i^T R_i^{-1} C_i\}$ as the selection metric is to jointly consider how far a sensor's QoS requirement can be satisfied and how much the sensor can contribute to the estimation accuracy. The extent to which the QoS requirement can be satisfied can be characterized by $\gamma^i_k$. 
        If $\gamma^i_k = 1$, then the QoS requirement of sensor $i$ can be fully satisfied. 
        If $\gamma^i_k < 1$, a larger value of $\gamma^i_k$ indicates that the QoS requirement of sensor $i$ is more likely to be satisfied, which further implies that sensor $i$ is more capable of transmission. 
        The contribution to the estimation accuracy of sensor $i$ can be characterized by the matrix $C_i^T R_i^{-1} C_i$. Specifically, if $C_i^T R_i^{-1} C_i>C_j^T R_j^{-1} C_j (i\ne j)$, then sensor $i$ will be credited with a greater contribution to the estimation accuracy than sensor $j$. Therefore, to combine these two considerations as well as to facilitate calculations and numerical comparisons, the trace of the assimilated sensing precision matrix is adopted as the selection metric. 
    
	The whole algorithm for sensor selection is summarized in Algorithm~\ref{algo:sensor_selection}. 
    In each iteration, the sensor with the smallest value of the trace of the assimilated sensing precision matrix will be removed from the candidate set $\mathcal{S}$. 
    \textcolor{black}{Note that if the QoS requirements of all the sensors in $\mathcal{S}$ can be satisfied with the communication resources of the system, then $\gamma_k^i (\forall i\in\mathcal{S})$ will be $1$ so that the objective value, i.e., $\mathrm{Tr}\{P_k\}$, can be minimized subject to the communication resource constraints. That is, if there exists some $\boldsymbol{\eta}_k$, $\boldsymbol{p}_k$ and $P_k$ such that $\{\boldsymbol{\gamma}_k,\boldsymbol{\eta}_k,\boldsymbol{p}_k,P_k\}$, where $\boldsymbol{\gamma}_k=[1,1,\ldots, 1]$, is a feasible solution for Problem~\ref{convex_problem}, then it must be the optimal solution. Due to interference between the sensors, the system communication resources may not be enough to accommodate the transmission of all the sensors in $\mathcal{S}$ under the QoS constraints. As a result, there will exist $i\in\mathcal{S}$ such that $\gamma_k^i<1$. As more and more sensors are removed from the candidate set $\mathcal{S}$, the interference between sensors can be reduced such that the QoS requirements of the remaining sensors can be satisfied. Gradually, $\gamma_k^i\in\mathcal{S}$ will approach $1$. }
    Therefore, the algorithm will end when $\gamma_k^i = 1, \forall i\in\mathcal{S}$, and the sensors in the candidate set $\mathcal{S}$ will be selected at time $k$.  
	\begin{algorithm}
		\caption{Sensor Selection Algorithm}
		\begin{algorithmic}\label{algo:sensor_selection}
			\STATE Initialize the set of candidate sensors $\mathcal{S}=\{1,2,\ldots,N\}$;
			\LOOP
			\STATE Solve Problem~\ref{relaxed_problem} for the system including the sensors in $\mathcal{S}$ via Algorithm~\ref{algo:sca}; 
			\IF {$\gamma_k^i = 1, \forall i\in\mathcal{S}$}
			\RETURN $\mathcal{S}$.
			\ELSE
			\STATE  $s=\arg\min_{ i\in\mathcal{S}}\left\{\mathrm{Tr}\{\gamma_k^i C_i^T R_i^{-1} C_i\}\right\}$;
			\STATE $\mathcal{S}=\mathcal{S}\backslash s$;
			\ENDIF
			\ENDLOOP
		\end{algorithmic}
	\end{algorithm}

\section{Simulation Results}\label{sec:simulation}
In the following simulations, all sensors are randomly deployed in a circular area with a radius of 2km, and the remote estimator is located at the center of the circle. The shadow fading is modeled as a random variable following log-normal distribution with zero mean and 8dB standard deviation. The path gain constant $c_i (i=1,2,\ldots,N)$ are set to 1. The pathloss is modeled as $l_i(d_i)=-147.3-43.3\log(d_i)$dB, where $d_i$ is the distance between the remote estimator and sensor $i$. The small-scale fading is modeled as Rayleigh fading with zero mean and unit variance. We set the noise power $\sigma^2=-30\textrm{dB}$ and the maximum transmission power $p_\textrm{max}^i = 1\textrm{mW}, \forall i$.
{According to~\cite{tsiropoulos2014radio}, the attainable data rate is monotonically non-decreasing with the SINR. Specifically, given the SINR QoS requirement $\theta_i$, the minimum attainable data rate $r_i$ of sensor $i$ is $B\log_2(1+\theta_i)$, where $B$ is the bandwidth. In the following simulations, all sensors are assumed to have the same minimum attainable data rate $r=50\textrm{Mbps}$, and hence we have $\theta_i=2^{\frac{r}{B}}-1, \forall i$.}

\subsection{Convergence of Algorithm~\ref{algo:sca}}
	First, we verify the convergence of Algorithm~\ref{algo:sca} proposed in Section~\ref{sec:problem_solving}. 
	In this simulation, we consider an unstable system with the $A=1.1$ and $Q=1$. There are $N=10$ sensors in the system, and $C_i\in\mathbb{R} (i=1,2,\ldots,N)$ and $R_i\in\mathbb{R} (i=1,2,\ldots,N)$ are chosen randomly.
	
	Fig.~\ref{convergence} shows the convergence behavior of Algorithm~\ref{algo:sca}. One can see that Algorithm~\ref{algo:sca} is monotonically convergent, which confirms Proposition~\ref{prop:convergence}.
\begin{figure}[h]
	\centering
	\includegraphics[width=3.5in]{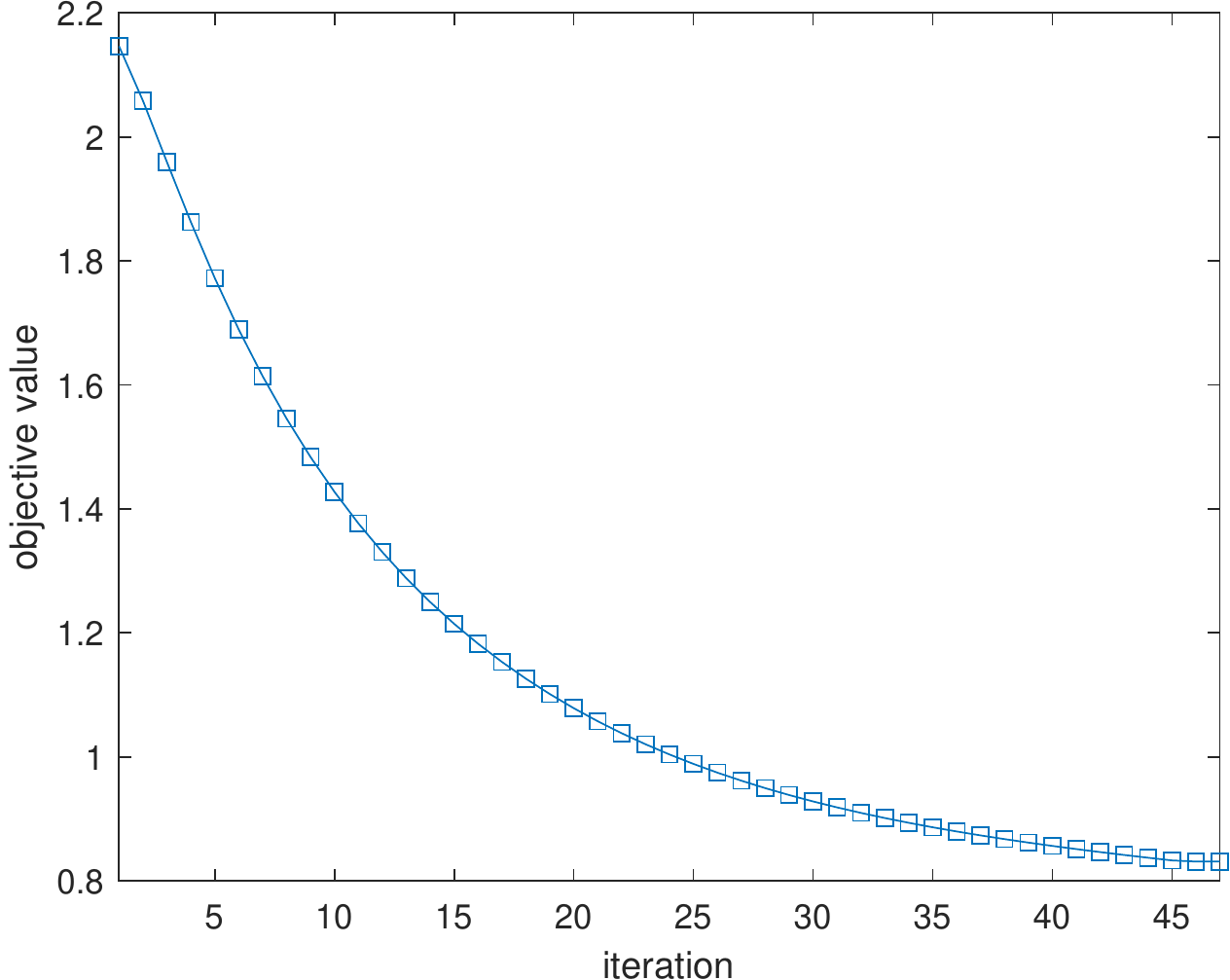}
	\caption{Convergence behavior of Algorithm~\ref{algo:sca}}
	\label{convergence}
\end{figure}

\subsection{Effectiveness of Algorithm~\ref{algo:sensor_selection}}
	In this subsection, we show the proposed heuristic (Algorithm~\ref{algo:sensor_selection}) is effective by comparing it with the following heuristic sensor selection methods:
	\begin{itemize}
		\item \textbf{Sensor number maximization (SNM)\cite{tang2019user}}: the main idea of this method is to maximize the number of sensors which the system can support with the limited communication resources. This method is often used in many communication scenarios for user selection.
		\item {\textbf{Precise measurements first (PMF)}: the basic idea of this heuristic is to select as many sensors with more precise measurements as possible, while ensuring that the QoS requirement constraints are satisfied. This method follows the below steps:
	\begin{itemize}
		\item[1)] Initialize the sensor set $\mathcal{S}=\{1,2,\ldots,N\}$ and the candidate set $\mathcal{C}=\emptyset$;
		\item[2)] Calculate $s=\arg\max_{i\in\mathcal{S}}\{\mathrm{Tr}\{C_i^TR_i^{-1}C_i\} \}$, and let $\mathcal{S}=\mathcal{S}\backslash s$;
		\item[3)] Check the feasibility of the following optimization problem:
		\begin{equation}\label{problem_PMF}
			\begin{split}
				\min_{{p}^i_k, i\in\mathcal{C}\cup\{s\}}& c\\
				\mathrm{s.t.}\quad& \frac{h_k^i p_k^i}{\sum_{j\in\mathcal{C}\cup\{s\}\backslash i}h_k^j p_k^j + \sigma^2}\geq \theta_i, \forall i\in\mathcal{C}\cup\{s\},
			\end{split}
		\end{equation}
		where $c$ can be any constant. If the problem is feasible, then let $\mathcal{C}=\mathcal{C}\cup \{s\}$;
		\item[4)] If $\mathcal{S}\ne\emptyset$, Go back to step 2); otherwise, output $\mathcal{C}$.
	\end{itemize}
}
	\end{itemize}
	{We consider a system with $x_k\in\mathbb{R}^5$, $y_k^{i}\in\mathbb{R}^{5},\forall i$, and the following parameters:
	\begin{align*}
		A&=\begin{bmatrix}
		 0.9416 & -0.0180 &  0.0715 &  0.0262 & -0.0196\\
		-0.0559 &  0.9948 &  0.0544 &  0.0251 & -0.0148\\
		-0.0564 & -0.0176 &  1.0686 &  0.0255 & -0.0162\\
		-0.0631 & -0.0090 &  0.0652 &  1.0284 & -0.0179\\
		-0.0197 & -0.0046 &  0.0200 &  0.0096 &  1.0049\\
		\end{bmatrix},\\
		Q &= I.
	\end{align*}
	There are $N=10$ sensors in the system. The value of $\mathrm{Tr}\{P_k\}$ is obtained by averaging $1000$ independent experimental results with different sensor location, channel states, matrices $C_i(i=1,2,\ldots,N)$ and $R_i\leq 5I(i=1,2,\ldots,N)$ generated randomly. }
	\begin{figure}[h]
		\centering
		\includegraphics[width=3.5in]{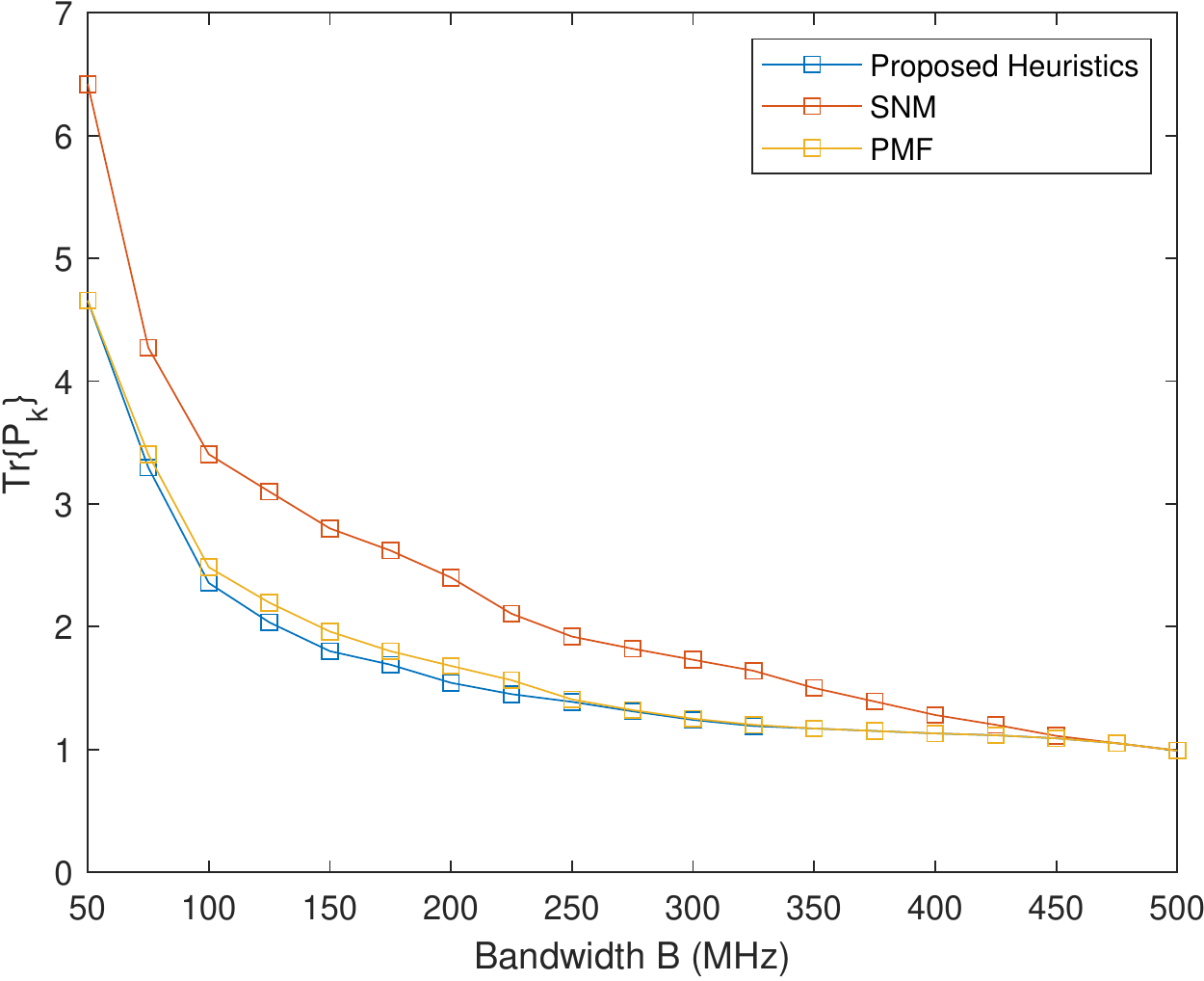}
		\caption{{Trace of error covariance matrix $P_k$ versus bandwidth $B$}}
		\label{high_dimensional_system}
	\end{figure}
	Fig.~\ref{high_dimensional_system} shows the trace of error covariance $P_k$ versus bandwidth $B$. One can see that our proposed heuristic always performs better than the SNM method and the PMF method, especially when there is less available bandwidth. 
    When bandwidth resources are enough to allow all the sensors to transmit their measurements simultaneously, the proposed method and the other two methods have the same performance since all the sensors will be selected. When bandwidth resources decrease, the SNM method tends to choose the sensors which have better channel states so that more sensors can transmit their measurements. However, it is inadequate to only consider selecting as many sensors as possible, since the sensing precision of a sensor also needs to be taken into account. At the extreme, the sensor which has the best channel state may have the worst sensing precision. As a result, this sensor may not deserve to be selected although it introduces low communication cost. On the contrary, a sensor which has little worse channel state but better sensing precision can be a good candidate to be selected. 
    The PMF method performs as well as the proposed method when there is more bandwidth available, which is because the system has more freedom to select the sensors with the most accurate measurement as much as possible. When the available bandwidth is limited, it is also inadequate to consider the measurement accuracy only. 
	Since our proposed method takes both transmission cost and measurement precision of sensors into account, it outperforms the SNM method and the PMF method especially in the low bandwidth region.
 
    {Moreover, when the bandwidth is so low that the system can only support one sensor to transmit, the optimal solution is to select the sensor with the most accurate measurement. In this case, the PMF method can obtain the optimal solution. The simulation results show that the proposed heuristic tends to have similar performance with the PMF when the bandwidth decreases, which implies that the solution obtained by the proposed heuristic is approaching the optimal solution.}

{
\subsubsection{Case Study}
In the following simulations, we study two simple cases to see how Algorithm~\ref{algo:sensor_selection} works.
}
\begin{table*}[h]\label{tbl:1}
\color{black}
\centering
\caption{Solving process of Algorithm~\ref{algo:sensor_selection} (case~1)}
\setlength{\tabcolsep}{3.5mm}{
\begin{tabular}{|cl|cl|cl|cl|cl|}
\hline
\multicolumn{2}{|c|}{Iteration} & \multicolumn{2}{c|}{$\mathcal{S}$}   & \multicolumn{2}{c|}{$\boldsymbol{\gamma}_k$}                      & \multicolumn{2}{c|}{$\boldsymbol{p}_k$}                           & \multicolumn{2}{c|}{$s$} \\ \hline
\multicolumn{2}{|c|}{1}         & \multicolumn{2}{c|}{$\{1,2,3,4,5\}$} & \multicolumn{2}{c|}{$\{0.1882, 1.0000, 0.0600, 1.0000, 1.0000\}$} & \multicolumn{2}{c|}{$\{0.0148, 0.1200, 1.0000, 0.1200, 0.1200\}$} & \multicolumn{2}{c|}{$1$} \\ \hline
\multicolumn{2}{|c|}{2}         & \multicolumn{2}{c|}{$\{2,3,4,5\}$}   & \multicolumn{2}{c|}{$\{1.0000, 0.1559, 1.0000, 1.0000\}$}         & \multicolumn{2}{c|}{$\{0.0483, 1.0000, 0.0483, 0.0483\}$}          & \multicolumn{2}{c|}{$3$} \\ \hline
\multicolumn{2}{|c|}{3}         & \multicolumn{2}{c|}{$\{2,4,5\}$}     & \multicolumn{2}{c|}{$\{1.0000, 1.0000, 1.0000\}$}                 & \multicolumn{2}{c|}{$\{0.7741, 0.7741, 0.7741\}$}                 & \multicolumn{2}{c|}{-}   \\ \hline
\end{tabular}
}
\end{table*}

\begin{table*}[h]\label{tbl:2}
\color{black}
\centering
\caption{Solving process of Algorithm~\ref{algo:sensor_selection} (case~2)}
\setlength{\tabcolsep}{3.5mm}{
\begin{tabular}{|cl|cl|cl|cl|cl|}
\hline
\multicolumn{2}{|c|}{Iteration} & \multicolumn{2}{c|}{$\mathcal{S}$}   & \multicolumn{2}{c|}{$\boldsymbol{\gamma}_k$}                      & \multicolumn{2}{c|}{$\boldsymbol{p}_k$}                          & \multicolumn{2}{c|}{$s$} \\ \hline
\multicolumn{2}{|c|}{1}         & \multicolumn{2}{c|}{$\{1,2,3,4,5\}$} & \multicolumn{2}{c|}{$\{1.0000, 0.1015, 1.0000, 0.1015, 0.1015\}$} & \multicolumn{2}{c|}{$\{0.0050, 0.0014, 1.0000, 0.0014, 0.0014\}$} & \multicolumn{2}{c|}{$2$} \\ \hline
\multicolumn{2}{|c|}{2}         & \multicolumn{2}{c|}{$\{1,3,4,5\}$}   & \multicolumn{2}{c|}{$\{1.0000, 1.0000, 0.1558, 0.1558\}$}         & \multicolumn{2}{c|}{$\{0.0050, 1.0000, 0.0021, 0.0021\}$}         & \multicolumn{2}{c|}{$4$} \\ \hline
\multicolumn{2}{|c|}{3}         & \multicolumn{2}{c|}{$\{1,3,5\}$}     & \multicolumn{2}{c|}{$\{1.0000, 1.0000, 0.3332\}$}                 & \multicolumn{2}{c|}{$\{0.0050, 1.0000, 0.0041\}$}                 & \multicolumn{2}{c|}{$5$} \\ \hline
\multicolumn{2}{|c|}{4}         & \multicolumn{2}{c|}{$\{1,3\}$}       & \multicolumn{2}{c|}{$\{1.0000, 1.0000\}$}                         & \multicolumn{2}{c|}{$\{0.0055, 0.9588\}$}                        & \multicolumn{2}{c|}{-}   \\ \hline
\end{tabular}
}
\end{table*}

\begin{table*}[h]\label{tbl:3}
    \color{black}
    \caption{Solutions obtained by different methods}
    \centering
    \setlength{\tabcolsep}{6mm}{
    \begin{tabular}{|cc|c|c|c|c|}
    \hline
    \multicolumn{2}{|c|}{\textbf{Method}}                                       & \textbf{SNM} & \textbf{PMF} & \textbf{Proposed Heuristic} & \multicolumn{1}{l|}{\textbf{Optimal}} \\ \hline
    \multicolumn{1}{|c|}{\multirow{2}{*}{\textbf{Case 1}}} & Selected Sensor    & $\{2,4,5\}$  & $\{2,3\}$    & $\{2,4,5\}$                 & $\{2,4,5\}$                                    \\ \cline{2-6} 
    \multicolumn{1}{|c|}{}                                 & Objective Function & $0.0645$     & 0.0822       & $0.0645$                    & $0.0645$                                       \\ \hline
    \multicolumn{1}{|c|}{\multirow{2}{*}{\textbf{Case 2}}} & Selected Sensor    & $\{2,4,5\}$  & $\{1,3\}$    & $\{1,3\}$                   & $\{1,3\}$                                      \\ \cline{2-6} 
    \multicolumn{1}{|c|}{}                                 & Objective Function & $0.2857$     & $0.1091$     & $0.1091$                    & $0.1091$                                       \\ \hline
    \end{tabular}
    }
\end{table*}

{
\emph{Case 1:} $C_i = 1, \forall i$, $R_1 = 0.5, R_2 = 0.2, R_3 = 0.15, R_4 = 0.2, R_5 = 0.2$, $h_k^1 = 2, h_k^2 = 1, h_k^3 = 0.01, h_k^4 = 1, h_k^5 = 1$. Hence, we have $\mathrm{Tr}\{C_3^TR_3^{-1}C_3\}>\mathrm{Tr}\{C_2^TR_2^{-1}C_2\}=\mathrm{Tr}\{C_4^TR_4^{-1}C_4\}=\mathrm{Tr}\{C_5^TR_5^{-1}C_5\}>\mathrm{Tr}\{C_1^TR_1^{-1}C_1\}$.}

{
\emph{Case 2:} $C_i = 1, \forall i$, $R_1 = 0.5, R_2 = 1, R_3 = 0.15, R_4 = 1, R_5 = 1$, $h_k^1 = 2, h_k^2 = 1, h_k^3 = 0.01, h_k^4 = 1, h_k^5 = 1$. Hence, we have $\mathrm{Tr}\{C_3^TR_3^{-1}C_3\}>\mathrm{Tr}\{C_1^TR_1^{-1}C_1\}>\mathrm{Tr}\{C_2^TR_2^{-1}C_2\}=\mathrm{Tr}\{C_4^TR_4^{-1}C_4\}=\mathrm{Tr}\{C_5^TR_5^{-1}C_5\}$.
}

{Moreover, in both cases, we set $A=1.005$, $Q = 1$, $P_{k-1} = 1$, $\sigma^2 = -20\textrm{dB}$, $p_\mathrm{max}^i = 1\textrm{mW}, \forall i$, and $\theta_i=\sqrt{2}-1, \forall i$. Note that the differences between case~1 and case~2 are the values of $R_2$, $R_4$, and $R_5$.}

{
Table~1 and Table~2 show the solving process of Algorithm~\ref{algo:sensor_selection} in case~1 and case~2, respectively. 
Table~3 shows the solutions obtained by different methods. In both case~1 and case~2, the set of selected sensors obtained by the SNM method is $\{2,4,5\}$, and the objective values are $0.0645$ and $0.2857$, respectively. Furthermore, the solution obtained by the SNM method will always maintain unchanged if $p_\mathrm{max}^i, \forall i$, $h_k^i, \forall i$ and $\theta^i, \forall i$ are fixed. The reason is that the SNM method only considers the constraints on communication resources, which are merely related to $p_\mathrm{max}^i, \forall i$, $h_k^i, \forall i$ and $\theta^i, \forall i$. Although the obtained solution is optimal for case~1, it cannot be optimal for most cases. 
The PMF method can obtain the optimal solution for case~2 but cannot for case~1. 
However, the proposed heuristic can obtain the optimal solutions for both case~1 and case~2, which further reflects the effectiveness and versatility of the method.
}

\section{Conclusion}
	In this paper, we study the sensor selection problem for remote state estimation under the QoS requirement constraints. 
	The formulated sensor selection problem is non-convex and it is difficult to obtain a feasible solution. To deal with this problem, we propose a heuristic algorithm, which is derived with the help of linear relaxation, SCA technique, and the concept of assimilated sensing precision matrix. Simulation results show that the proposed heuristic outperforms the existing methods. 
	For future work, the beamforming technique can be considered when the sensors and the remote estimator are equipped with multiple antennas, which can further improve the spectrum efficiency.
	
\appendix
\section{Proof of Proposition~\ref{prop:convergence}}
	{We first prove the monotonic convergence of Algorithm~\ref{algo:sca} with the following Lemma~\ref{lm:mono_converge}:
	\begin{lemma}[\cite{royden1988real}, Theorem 15]\label{lm:mono_converge}
		A monotone sequence of real numbers converges if and only if it is bounded.	
	\end{lemma}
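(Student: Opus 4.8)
The plan is to prove the two implications of the biconditional separately, noting at the outset that only the direction ``bounded $\Rightarrow$ convergent'' actually uses monotonicity; the converse holds for every convergent sequence. Write the sequence as $\{a_n\}$ and recall that ``monotone'' means the sequence is either nondecreasing or nonincreasing throughout.

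For the forward direction (convergent $\Rightarrow$ bounded), I would argue that boundedness is a general property of convergent sequences. Suppose $a_n \to L$. Taking $\varepsilon = 1$ in the definition of the limit yields an index $N$ with $|a_n - L| < 1$, hence $|a_n| < |L| + 1$, for all $n \geq N$. The finitely many remaining terms $a_1, \ldots, a_{N-1}$ are trivially bounded, so $M \triangleq \max\{|a_1|, \ldots, |a_{N-1}|, |L| + 1\}$ bounds the entire sequence. Monotonicity plays no role here.

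For the reverse direction (monotone and bounded $\Rightarrow$ convergent), I would reduce to the nondecreasing case without loss of generality, since a nonincreasing sequence $\{a_n\}$ is handled by applying the result to $\{-a_n\}$. Assuming $\{a_n\}$ is nondecreasing and bounded, the set $\{a_n : n \in \mathbb{N}\}$ is bounded above, so the supremum $L \triangleq \sup_n a_n$ exists. I then claim $a_n \to L$: given $\varepsilon > 0$, the number $L - \varepsilon$ fails to be an upper bound, so some term satisfies $a_N > L - \varepsilon$; by monotonicity every $a_n$ with $n \geq N$ satisfies $L - \varepsilon < a_N \leq a_n \leq L$, whence $|a_n - L| < \varepsilon$. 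This establishes convergence to $L$.

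The hard part — indeed the only nontrivial ingredient — is the existence of the supremum $L$, which rests on the completeness (least upper bound property) of $\mathbb{R}$. This is precisely the step that fails over $\mathbb{Q}$ and distinguishes the theorem from a purely formal manipulation; everything else is a routine $\varepsilon$--$N$ verification. Since the statement is quoted from \cite[Theorem 15]{royden1988real}, one could alternatively just cite that reference, but making the dependence on completeness explicit clarifies exactly why monotone bounded sequences must converge.
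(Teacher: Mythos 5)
Your proof is correct: the paper does not prove this lemma at all --- it simply cites it as \cite[Theorem~15]{royden1988real} and uses it as a black box in the proof of Proposition~\ref{prop:convergence} --- and your argument is exactly the canonical one behind that citation (boundedness of convergent sequences via an $\varepsilon=1$ tail estimate, and the supremum/completeness argument for the monotone bounded direction). Your remark that completeness of $\mathbb{R}$ is the only nontrivial ingredient is also accurate, so there is nothing to correct.
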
} 
	{It is easy to see that the optimal point (solution) obtained in the $(t-1)$-th iteration is also a feasible point of the optimization problem in the $t$-iteration:
	\begin{equation}
		\begin{split}
		&\left(((\eta_k^i)^{(t)}+(\gamma_k^i)^{(t)})^2-2(b_k^i)^{(t)}((\eta_k^i)^{(t)}-(\gamma_k^i)^{(t)})\right.\\ &\left.\quad+((b_k^i)^{(t)})^2\right)/{4}\\&=
		(\eta_k^i)^{(t)} (\gamma_k^i)^{(t)}\\ &\leq \left(((\eta_k^i)^{(t)}+(\gamma_k^i)^{(t)})^2-2(b_k^i)^{(t-1)}((\eta_k^i)^{(t)}-(\gamma_k^i)^{(t)})\right.\\&\left.\quad+((b_k^i)^{(t-1)})^2\right)/4\\
		&\leq \frac{(h^i_k)^{(t)}(p^i_k)^{(t)}}{\theta_i},
		\end{split}
	\end{equation}}
	That is, the optimal solution in the $(t-1)$-th iteration is also achievable in the $t$-th iteration. Therefore, the optimal solution obtained in the $t$-th iteration is no greater than that obtained in the $(t-1)$-th iteration. {Moreover, since $P_k\geq 0$, the value of the objective function is bounded.} Monotonic convergence of Algorithm~\ref{algo:sca} is hence proved.
	
	Next, we prove the converged solution satisfies all the constraints as well as the KKT conditions of Problem~\ref{relaxed_problem}. 
	
	According to~\cite{marks1978general}, the SCA-based algorithm (Algorithm~\ref{algo:sca}) can always converge to a solution satisfying the KKT conditions of Problem~\ref{relaxed_problem} when the following conditions are satisfied:
	\begin{itemize}
		\item[1)] Each non-convex constraint $f(\boldsymbol{x})\leq 0$ of the optimization problem is iteratively approximated by a convex constraint $f_{cv}(\boldsymbol{x};\hat{\boldsymbol{x}})\leq 0$, where $\hat{\boldsymbol{x}}$ is the optimal solution to the approximated problem in the previous iteration, and $f_{cv}(\boldsymbol{x};\hat{\boldsymbol{x}})$ is a convex function satisfying:
		\begin{align}
			f_{cv}(\boldsymbol{x};\hat{\boldsymbol{x}})&\geq f(\boldsymbol{x}),\\
			f_{cv}(\hat{\boldsymbol{x}};\hat{\boldsymbol{x}})&= f(\hat{\boldsymbol{x}}),\\
			\nabla f_{cv}(\boldsymbol{x};\hat{\boldsymbol{x}})|_{\boldsymbol{x}=\hat{\boldsymbol{x}}}&=\nabla f(\boldsymbol{x})|_{\boldsymbol{x}=\hat{\boldsymbol{x}}}.
		\end{align}
		
		\item[2)] The approximated convex problem satisfies the Slater's condition~\cite{boyd2004convex}.
	\end{itemize}
	
	{First, it is easy to verify condition 1) is satisfied when the approximation~(\ref{eq:approximation}) is adopted.}	
	Next, we show condition 2) is also satisfied for Problem~\ref{convex_problem}, i.e., there exists a strictly feasible solution to Problem~\ref{convex_problem}~\cite{boyd2004convex}. 
	As mentioned before, the $(t-1)$-th solution $\{\boldsymbol{\gamma}_k,\boldsymbol{\eta}_k,\boldsymbol{p}_k,P_k\}^{(t-1)}$ is also a feasible solution to the optimization problem in the $t$-th iteration. Let $\hat{p}_k^i=\phi_k^i({p}_k^i)^{(t-1)}$, where $\phi_k^i\in\mathbb{R}$ and
	\begin{align}
		\begin{split}
			&1+\frac{(\hat{\gamma}_k^i-(\gamma_k^i)^{(t-1)})(\hat{\gamma}_k^i-(\gamma_k^i)^{(t-1)}+4(\eta_k^i)^{(t-1)})}{\frac{4h_k^i p_k^i}{\theta_i}}\\&<\phi_k^i<1, \forall i,
		\end{split}\\
		&(\gamma_k^i)^{(t-1)}-4(\eta_k^i)^{(t-1)}<\hat{\gamma}_k^i<(\gamma_k^i)^{(t-1)}, \forall i.
	\end{align}
	It is easy to show that the solution $\{\hat{\boldsymbol{\gamma}}_k, \boldsymbol{\eta}_k^{(t-1)},\hat{\boldsymbol{p}}_k,P_k^{(t-1)}\}$ is a strictly feasible solution to Problem~\ref{convex_problem}. Therefore, the converged solution satisfies the KKT conditions of Problem~\ref{relaxed_problem}.


\bibliographystyle{unsrt}
\bibliography{autosam.bib}

\end{document}